\theoremstyle{plain}
\newtheorem{thm}{\protect\theoremname}
\theoremstyle{remark}
\newtheorem{rem}[thm]{\protect\remarkname}
\DeclareMathAlphabet{\mathcal}{OMS}{cmsy}{m}{n}
\providecommand{\remarkname}{Remark}
\providecommand{\theoremname}{Theorem}
\providecommand{\remarkname}{Remark}
\providecommand{\theoremname}{Theorem}
\begin{document}
\global\long\def\Sgm{\boldsymbol{\Sigma}}%
\global\long\def\W{\boldsymbol{W}}%
\global\long\def\H{\boldsymbol{H}}%
\global\long\def\P{\mathbb{P}}%
\global\long\def\Q{\mathbb{Q}}%
\global\long\def\xx{\bm{x}}%
\global\long\def\dd{\mathrm{d}}%

\title{On the transport equation for probability density functions of turbulent
vorticity fields}
\author{Jiawei Li\thanks{Department of Mathematical Sciences, Carnegie Mellon University, Pittsburgh,
PA 15213, United States. \emph{Email}:\protect\protect\protect\protect\protect\protect\protect\href{mailto:jiaweil4@andrew.cmu.edu}{jiaweil4@andrew.cmu.edu}},$\;\;$Zhongmin Qian\thanks{Mathematical Institute, University of Oxford, Oxford OX2 6GG, United
Kingdom, and Oxford Suzhou Center for Advanced Research. \emph{Email}:
\protect\protect\protect\protect\protect\protect\protect\href{mailto:zhongmin.qian@maths.ox.ac.uk}{zhongmin.qian@maths.ox.ac.uk}} \ and\ Mingrui Zhou\thanks{Mathematical Institute, University of Oxford, Oxford OX2 6GG, United
Kingdom. \emph{Email}: \protect\protect\protect\protect\protect\protect\protect\href{mailto:mingrui.zhou@maths.ox.ac.uk}{mingrui.zhou@maths.ox.ac.uk}}}
\maketitle
\begin{abstract}
Vorticity random fields of turbulent flows (modeled over the vorticity
equation with random initial data for example) are singled out as
the main dynamic variables for the description of turbulence, and
the evolution equation of the probability density function (PDF) of
the vorticity field has been obtained. This PDF evolution equation
is a mixed type partial differential equation (PDE) of second order
which depends only on the conditional mean (which is a first order
statistics) of the underlying turbulent flow. This is in contrast
with Reynolds' mean flow equation which relies on a quadratic statistics.
The PDF PDE may provide new closure schemes based on the first order
conditional statistics, and some of them will be described in the
paper. We should mention that the PDF equation is interesting by its
own and is worthy of study as a PDE of second order.

\medskip{}

\emph{Keywords}: Navier-Stokes equation, PDF method, turbulent flows,
vorticity.

\medskip{}

\emph{MSC Classifications}: 76F02, 76F55, 76D05, 76M35 
\end{abstract}

\section{Introduction}

In statistical fluid mechanics (cf. \citep{MoninandYaglom1965}),
the velocity $U(x,t)$ of a turbulent flow is promoted to a random
field, cf. \citep{K41-a}, indexed by space variable $x\in\mathbb{R}^{3}$
and time parameter $t$. From this point of view, the turbulence problem,
if there is one, seeks for a description of the distribution of the
velocity field. This distribution is rather complicated and consists
of all joint distributions of the velocity across over finitely many
locations and times, and therefore it is challenging to describe the
distributions of turbulent flows in general. As early as in 1950's,
E. Hopf \citep{Hopf1952} (cf. \citep{MoninandYaglom1965} too) made
an ambitious attempt and derived differential equations for the distribution
of a turbulent flow. Hopf's differential equations are however infinite
dimensional and involve functional derivatives, which are therefore
too difficult to extract useful information about turbulence. The
dominant approach in the statistical theory of turbulence, initiated
in G. I. Taylor's seminal work \citep{Taylor1921,Taylor1935}, has
been based on the analysis of moment structure functions and relied
on the spectral method (cf. \citep{Batchelor1953,MoninandYaglom1965}
for example). In the past decades, attention has been paid to the
one-dimensional marginal distribution of the velocity field $U(x,t)$,
whose distribution is finite-dimensional. It is reasonable to assume
that the distribution of $U(x,t)$, with $(x,t)$ fixed, has a probability
density function (PDF) with respect to the Lebesgue measure on $\mathbb{R}^{3}$.
The PDF methods based on formal PDF transport equations have been
developed through a series of work by Pope and other researchers (cf.
\citep{Pope1985,Pope2000}), which become powerful tools for modeling
turbulent flows. The PDF, by definition, has to satisfy ``the adjoint
equation'' of the fluid dynamic equations. If the turbulent flow
is an incompressible viscous fluid flow, then the PDF of the velocity
must satisfy the adjoint equation of the Navier-Stokes equation. The
transport equation for PDF known in the literature is still a formal
adjoint equation of the Navier-Stokes equation, and therefore only
few features from the transport equation may be used in modeling turbulent
flows.

For incompressible fluid flows, the Navier-Stokes equation is equivalent
to the vorticity equation (see (\ref{eq:vort-tu1}) below), and therefore
it is natural to consider the vorticity $W=\nabla\wedge U$ as the
main fluid dynamic variable for the study of turbulence \citep{Moffatt2011}.
There are good reasons why we should concentrate on the vortex motion
in turbulence. The velocity of a turbulent flow is unlikely to be
independently or possess conditional independence with respect to
the spatial variable, while vortex motions of many turbulent flows
observed in nature (such as vortex lines, vortex rings) acquire certain
conditional independence, in the sense that by focusing on the motion
near a fixed region, the future vortex motions evolve more or less
independent of what happens in other positions. In fact there are
good evidences which demonstrate that some sort of superposition property
of vorticity may be maintained, although not exactly due to highly
non-linear and non-local nature of turbulence. These observations
are valuable in modeling turbulent flows via the vorticity, which
are already applied in vortex methods (cf. \citep{CottetandKoumoutsakos2000,MajadaBertozzi2002}).
In this sense PDF methods based on the vorticity are valuable.

The main contribution of the present work is the partial differential
equation for the PDF of the vorticity (called PDF PDE, or PDF equation
for short), which will be derived in the main body of the paper. The
PDF PDE is highly non-linear, however the most striking aspect is
that the PDF PDE for the vorticity depends only on one single first-order
statistical characteristic of the turbulent flow. More precisely,
we identify the main statistical characteristic needed for the PDF
PDE with the conditional mean function of the increment of the vorticity
given the current vorticity: 
\begin{equation}
\mu^{i}(x,y,w,t)=\mathbb{E}\left[\left.W^{i}(y,t)-W^{i}(x,t)\right|W(x,t)=w\right]\label{eq:mu-average}
\end{equation}
for $i=1,2,3$, where $x,y,w\in\mathbb{R}^{3}$ and $t\geq0$. The
PDF of the vorticity $W$ is a solution to the PDF PDE, which is a
second-order partial differential equation where coefficients appearing
in the PDF PDE depend on $\mu$ only. Besides its theoretical interest,
this PDF PDE paves the way towards practical modeling the statistics
of turbulent flows based on the vorticity PDF.

The paper is organized as the following. In Section 2, several notions
and notations together with several standard assumptions about fluid
dynamic random fields will be introduced, and the main result, namely
the PDF PDE for vorticity will be derived. In Sections 3 and 4, the
theoretical foundation, based on our PDF PDE for vorticity, will be
laid for the purpose of modeling various turbulent flows. In Section
5, we propose the most direct way of modeling PDF of the vorticity
by specifying the conditional mean function $\mu^{i}$. The PDF PDE
considered purely as a PDE theory is over-determined, due to its non-linearlity
in the sense that the coefficients appearing in the PDF PDE are not
independent of its solutions. Therefore care is needed to ensure that
the additional constraint is satisfied. In the last section, we propose
the heat flow method to model statistical quantities needed for closing
the PDF PDE and obtain concrete PDF examples.

\emph{Conventions on notations}. The following conventions are used
throughout the paper. Firstly Einstein's convention on summation on
repeated indices through their ranges is assumed, unless otherwise
specified. If $A$ is a vector or a vector field (in the space of
dimension three) dependent on some parameters, then its components
are labeled with upper-script indices, i.e. $A=(A^{i})=(A^{1},A^{2},A^{3})$.
The same convention applies to coordinates too. Partial derivatives
of functions may be labeled with variables in sub-scripts. For example,
if $A(w;x,t)$ is a vector-valued function depending on $w,x\in\mathbb{R}^{3}$
and $t$, then $\nabla_{w}A$ means the total derivatives $(\frac{\partial}{\partial w^{i}}A^{j})$,
$\Delta_{x}A$ means the vector $(\frac{\partial^{2}}{\partial x^{k}\partial x^{k}}A^{i})$
of the Laplacians of $A^{i}$. However, as a general rule, derivatives
without sub-scripts mean the derivatives with respect to the variable
$x=(x^{i})$, unless otherwise specified for avoiding possible confusion.

The velocity vector field will be denoted by $U$. $W=\nabla\wedge U$
is its vorticity so that $W^{i}=\varepsilon^{ijk}\frac{\partial}{\partial x^{j}}U^{k}$,
where $\varepsilon^{ijk}$ are the Levi-Civita symbols.

\section{PDF equation for the vorticity}

In this paper we regard the vorticity of an incompressible turbulent
flow as the main dynamic variable. The goal of this section is to
derive the evolution equation for the PDF of the vorticity of an incompressible
turbulent flow. 

\subsection{\label{Sec2.1}Prelims and assumptions}

Let $U(x,t)$ be the velocity of an incompressible turbulent flow
in $\mathbb{R}^{3}$ with viscosity $\nu$. Suppose there is no external
force supplied to the turbulence. Being a random field though, $U(x,t)$
satisfies the Navier-Stokes equations 
\[
\frac{\partial U^{i}}{\partial t}+U^{j}\frac{\partial U^{i}}{\partial x^{j}}=\nu\Delta U^{i}-\frac{\partial P}{\partial x^{i}}
\]
and 
\[
\frac{\partial U^{j}}{\partial x^{j}}=0,
\]
where $i=1,2,3$, and $P$ is the pressure, subject to initial conditions
which are random. The motion equations of the vorticity $W=\nabla\wedge U$
are the vorticity equations 
\begin{equation}
\frac{\partial W^{i}}{\partial t}+U^{j}\frac{\partial W^{i}}{\partial x^{j}}=\nu\Delta W^{i}+W^{j}\frac{\partial U^{i}}{\partial x^{j}}\label{eq:vort-tu1}
\end{equation}
for $i=1,2,3$.

We make the following technical assumptions.

First, we assume that both $U(x,t)$ and $W(x,t)$ have derivatives
in $x$ and $t$ of any order, and theses derivatives decay to zero
at infinity sufficiently fast, so that possible boundary terms arising
in applications of the Stokes' formula have no contributions in computations
below. Therefore, since $U$ is divergence-free, $\Delta U=-\nabla\wedge W$,
according to Green's formula
\[
U^{i}(x,t)=\int_{\mathbb{R}^{3}}\frac{1}{4\pi|x-y|}\varepsilon^{ijk}\frac{\partial}{\partial y^{j}}W^{k}(y,t)\mathrm{d}y,
\]
which yields the Biot-Savart law: 
\begin{equation}
U^{i}(x,t)=-\int_{\mathbb{R}^{3}}\varepsilon^{ijk}\frac{x^{j}-y^{j}}{4\pi|x-y|^{3}}W^{k}(y,t)\dd y.\label{eq:b-s form1}
\end{equation}
for $i=1,2,3$.

Our second technical assumption is to impose certain regularity on
the distribution of the vorticity. At each point $x$ and instance
$t\geq0$, $W(x,t)$ is a random variable defined on a probability
space $(\Omega,\mathcal{F},\mathbb{P})$, taking values in $\mathbb{R}^{3}$,
and therefore its law (or called distribution) is a probability measure
on $(\mathbb{R}^{3},\mathcal{B}(\mathbb{R}^{3}))$. We assume that
$W(x,t)$ has a positive and smooth probability density function (PDF),
denoted by $f(w;x,t)$, in the sense that
\[
\mathbb{P}\left(W(x,t)\in A\right)=\int_{A}f(w;x,t)\dd w
\]
for every Borel measurable subset $A$. The joint distribution of
$W(x,t)$ and $W(y,t)$ for any pair $x\neq y$ has a positive and
smooth PDF, denoted by $f_{2}(w_{1},w_{2};x,y,t)$. 

To state our third assumption, we first notice that 
\[
f_{2|1}(w;w_{1},x,y,t)=\frac{f_{2}(w_{1},w;x,y,t)}{f(w_{1};x,t)}
\]
is the PDF of the conditional law of $W(y,t)$ given $W(x,t)=w_{1}$.
The conditional mean function $\mu^{i}(x,y,w,t)$ is defined by 
\begin{align}
\mu^{i}(x,y,w,t) & =\mathbb{E}\left[\left.W^{i}(y,t)-W^{i}(x,t)\right|W(x,t)=w\right]\nonumber \\
 & =\int_{\mathbb{R}^{3}}(w_{1}^{i}-w^{i})f_{2|1}(w_{1};w,x,y,t)\dd w_{1}\label{eq:cd-mean}
\end{align}
for $i=1,2,3$, which will play a dominant role in the sequel. 

The third technical assumption is about the regularity of the conditional
average function $\mu=(\mu^{1},\mu^{2},\mu^{3})$. It is assumed that
the derivatives of $\mu$ of any order exist and decay to zero sufficiently
fast at infinity. Moreover it is assumed that $\mu$ has an asymptotic
expansion 
\begin{equation}
\mu^{i}(x,y,w,t)=a_{k}^{i}(x,w,t)(y^{k}-x^{k})+b_{jk}^{i}(x,w,t)(y^{k}-x^{k})(y^{j}-x^{j})+o\left(|y-x|^{2}\right)\label{eq:mean asy}
\end{equation}
as $|y-x|\rightarrow0$, where $a_{k}^{i}$'s and $b_{jk}^{i}$'s
are assumed to be continuous with respect to all of their arguments.
We denote $b^{i}=b_{kk}^{i}$ for $i=1,2,3$. It is clear that 
\begin{equation}
a_{k}^{i}(x,w,t)=\mathbb{E}\left[\left.\frac{\partial W^{i}}{\partial x^{k}}(x,t)\right|W(x,t)=w\right]=\left.\frac{\partial}{\partial y^{k}}\mu^{i}(x,y,w,t)\right|_{y=x}\label{eq:a_k defn}
\end{equation}
which represents the local rate of change in the vortex motion over
the turbulent region, and 
\begin{equation}
b^{i}(x,w,t)=\frac{1}{2}\mathbb{E}\left[\Delta_{x}W^{i}(x,t)|W(x,t)=w\right]=\frac{1}{2}\left.\Delta_{y}\mu^{i}(x,y,w,t)\right|_{y=x}.\label{eq:b defn}
\end{equation}

In the remainder of the paper, we will work with a turbulent flow
for which the three assumptions listed above are satisfied. 

\subsection{PDF equation and its derivation}

In this section, we derive the main result of the paper, that is,
a partial differential equation which the PDF of the vorticity must
satisfy. 
\begin{thm}
\label{thm:pdf-v}Under the assumptions and notations established
in Sec. \ref{Sec2.1}, suppose the PDF $f(w;x,t)$ of the vorticity
is smooth in $(w,x,t)$ and has finite moments, that is $\int_{\mathbb{R}^{3}}|w|^{n}f(w;x,t)dw<\infty$
for every $n=1,2,\cdots$. Then $f$ satisfies the following partial
differential equation: 
\begin{equation}
\left(\frac{\partial}{\partial t}+\frac{\partial B^{i}}{\partial x^{i}}+B^{i}\frac{\partial}{\partial x^{i}}-\nu\Delta_{x}\right)f=\nu\frac{\partial}{\partial w^{i}}\left(\frac{\partial}{\partial x^{k}}\left(fa_{k}^{i}\right)-2b^{i}f\right)+\frac{\partial}{\partial w^{i}}\left(fD^{i}\right),\label{eq:vot-pdf1}
\end{equation}
where $\Delta_{x}$ is the Laplacian with respect to the space variable
$x$, $a_{k}$ and $b$ are given as in (\ref{eq:a_k defn}) and (\ref{eq:b defn}),
\begin{equation}
B^{i}(x,w,t)=\int_{\mathbb{R}^{3}}\frac{1}{4\pi|y-x|}\varepsilon^{ijk}\frac{\partial}{\partial y^{j}}\mu^{k}(x,y,w,t)\dd y,\label{eq:B-vector}
\end{equation}
and 
\begin{equation}
D^{i}(x,w,t)=w^{l}\int_{\mathbb{R}^{3}}\frac{y^{l}-x^{l}}{4\pi|y-x|^{3}}\varepsilon^{ijk}\frac{\partial}{\partial y^{j}}\mu^{k}(x,y,w,t)\dd y.\label{eq:D-vector}
\end{equation}
\end{thm}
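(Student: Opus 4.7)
The plan is to reduce \eqref{eq:vot-pdf1} to a collection of conditional-expectation identities via the test-function method, and to close those identities using Green's form of the Biot--Savart law together with a compatibility relation between $f$ and the conditional mean $a_{k}^{i}$. For any smooth compactly supported $\varphi$ on $\mathbb{R}^{3}$, writing $\int\varphi(w)f(w;x,t)\,dw = \mathbb{E}[\varphi(W(x,t))]$, differentiating in $t$, and substituting the vorticity equation \eqref{eq:vort-tu1} gives
\[
\frac{\partial}{\partial t}\int\varphi(w) f\,dw \;=\; \mathbb{E}\!\left[\frac{\partial\varphi}{\partial w^{i}}(W)\Bigl(-U^{j}\frac{\partial W^{i}}{\partial x^{j}} + \nu\Delta W^{i} + W^{j}\frac{\partial U^{i}}{\partial x^{j}}\Bigr)\right].
\]
Matching integrand-level expressions against the arbitrary $\varphi$ will yield the PDE.

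For the convection term, the chain rule $\frac{\partial\varphi}{\partial w^{i}}(W)\partial_{j}W^{i} = \frac{\partial}{\partial x^{j}}\varphi(W)$ and incompressibility $\partial_{j}U^{j}=0$ reduce it to $-\frac{\partial}{\partial x^{j}}\mathbb{E}[U^{j}\varphi(W)]$. Inserting Green's formula for $U^{j}$ and noting that $\mathbb{E}[W^{k}(y,t)\,|\,W(x,t)=w] = w^{k} + \mu^{k}(x,y,w,t)$, the $w$-piece is annihilated by the $\frac{\partial}{\partial y^{j}}$ appearing in \eqref{eq:B-vector}, so the inner conditional expectation is exactly $B^{j}(x,w,t)$. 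Applying the same idea to $\frac{\partial U^{i}}{\partial x^{l}}$, obtained by differentiating Green's formula in $x$, identifies $\mathbb{E}[W^{l}(x,t)\frac{\partial U^{i}}{\partial x^{l}}\,|\,W(x,t)=w] = D^{i}(x,w,t)$. After $w$-integration by parts, these furnish the $-\frac{\partial}{\partial x^{i}}(B^{i}f)$ and the $D^{i}$-driven contributions in \eqref{eq:vot-pdf1}.

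The delicate step is the diffusion term. The direct identification $\mathbb{E}[\Delta_{x}W^{i}\,|\,W(x,t)=w] = 2b^{i}$, already encoded in \eqref{eq:b defn}, yields only $-2\nu\frac{\partial}{\partial w^{i}}(b^{i}f)$, which by itself lacks the $\nu\Delta_{x}f$ and $\frac{\partial}{\partial x^{k}}(fa_{k}^{i})$ structure demanded by \eqref{eq:vot-pdf1}. To bridge this, I would first prove the compatibility identity
\[
\frac{\partial f}{\partial x^{k}}(w;x,t) = -\frac{\partial}{\partial w^{i}}\bigl(a_{k}^{i}(x,w,t)\,f(w;x,t)\bigr),
\]
via the test-function argument applied to $\frac{\partial}{\partial x^{k}}\mathbb{E}[\varphi(W(x,t))] = \mathbb{E}[\frac{\partial\varphi}{\partial w^{i}}(W)\frac{\partial W^{i}}{\partial x^{k}}]$ combined with \eqref{eq:a_k defn}. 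Applying $\frac{\partial}{\partial x^{k}}$ once more gives $\Delta_{x}f = -\frac{\partial}{\partial w^{i}}\frac{\partial}{\partial x^{k}}(a_{k}^{i}f)$, which permits the algebraic rewriting
\[
-2\nu\frac{\partial}{\partial w^{i}}(b^{i}f) \;=\; \nu\Delta_{x}f + \nu\frac{\partial}{\partial w^{i}}\Bigl(\frac{\partial}{\partial x^{k}}(fa_{k}^{i}) - 2b^{i}f\Bigr),
\]
and assembles the three pieces into exactly the form \eqref{eq:vot-pdf1}.

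I expect the main obstacle to be this diffusion reduction: the term $\nu\Delta_{x}f$ does not appear from any direct conditional expectation, and its presence in the PDE is enforced only via the global compatibility between the $x$-gradient of $f$ and the first-order conditional statistic $a_{k}^{i}$. The remaining work --- exchanging expectation with $t$- and $x$-derivatives, interchanging integration and expectation in the Biot--Savart integrals, and discarding boundary contributions in the $w$- and $y$-integrations by parts --- should all go through under the smoothness and decay hypotheses collected in Section~\ref{Sec2.1}.
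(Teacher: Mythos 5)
Your proposal is correct, and for the transport and stretching terms it follows the same path as the paper's $I_{2}$ and $I_{3}$: insert the Green/Biot--Savart representation, condition on $W(x,t)=w$ so that only $\partial_{y^{j}}\mu^{k}$ survives (the constant piece $w^{k}$ being killed by the $y$-derivative), and integrate by parts in $w$; the paper merely implements the conditioning rigorously through finite differences and the two-point PDF $f_{2}$, which is the level of justification you defer to the standing assumptions. The genuine difference is your treatment of the viscous term. The paper first rewrites $\nu F_{i}(W)\Delta_{x}W^{i}=\nu\Delta_{x}F(W)-\nu\,\partial_{x^{k}}F_{i}(W)\,\partial_{x^{k}}W^{i}$, reads off $\nu\Delta_{x}f$ from the first piece, and evaluates the gradient-correlation piece by a delicate $h^{-2}$ finite-difference limit that uses the second-order expansion \eqref{eq:mean asy}; this is the hardest step of its proof and is what produces $\nu\partial_{w^{i}}\bigl(\partial_{x^{k}}(fa_{k}^{i})-2fb^{i}\bigr)$ directly. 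You instead condition $\Delta_{x}W^{i}$ via \eqref{eq:b defn} to get $-2\nu\partial_{w^{i}}(b^{i}f)$ and recover the mixed structure from the compatibility identity $\partial_{x^{k}}f=-\partial_{w^{i}}(a_{k}^{i}f)$, proved by the same test-function/conditioning device from \eqref{eq:a_k defn}. That identity is valid under the paper's hypotheses (differentiating it once more in $x$ uses differentiability of $a_{k}^{i}f$ in $x$, which the statement of \eqref{eq:vot-pdf1} implicitly assumes anyway), your algebra checks, and the two routes are mutually consistent precisely because the compatibility relation holds for the true vorticity PDF. What each buys: your route avoids the paper's most delicate limit and makes transparent that the right-hand side of \eqref{eq:vot-pdf1} is a rewriting of $-2\nu\partial_{w^{i}}(b^{i}f)$ modulo that identity; the paper's route obtains the mixed form without invoking any relation between $f$ and $a_{k}^{i}$, which matters later when $a,b,B,D$ are modeled as given data not tied to $f$, so the redundancy you exploit is deliberately left in place. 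One bookkeeping remark: if you track the sign of the stretching contribution you will find it enters as $-\partial_{w^{i}}(fD^{i})$, exactly as in the paper's own $I_{3}$; the $+$ sign displayed in \eqref{eq:vot-pdf1} (with $D^{i}$ as in \eqref{eq:D-vector}) is an internal discrepancy of the paper, not a defect of your argument.
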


\begin{proof}
Let $F$ be a smooth function on $\mathbb{R}^{3}$ with a compact
support. By the definition of PDF one has
\[
\int_{\mathbb{R}^{3}}F(w)\frac{\partial}{\partial t}f(w;x,t)\dd w=\mathbb{E}\left[\frac{\partial}{\partial t}F(W(x,t))\right].
\]
We are going to calculate the right-hand side expectation in terms
of the PDF $f(w;x,t)$ and other statistical characteristics of the
turbulent flow. This will be done by using the following equation
\[
\frac{\partial}{\partial t}F(W(x,t))=\nu\Delta_{x}F(W)-\frac{\partial\left(U^{i}F(W)\right)}{\partial x^{i}}+F_{j}(W)W^{i}\frac{\partial U^{j}}{\partial x^{i}}-\nu\frac{\partial F_{j}(W)}{\partial x^{k}}\frac{\partial W^{j}}{\partial x^{k}},
\]
where, for simplicity, $F_{j}$ denote the partial derivatives $\frac{\partial F}{\partial x^{j}}$
of $F$, $j=1,2,3$. This equation follows directly from the vorticity
equations. From the previous equation we obtain that 
\begin{align}
\mathbb{E}\left[\frac{\partial}{\partial t}F(W(x,t))\right] & =\nu\mathbb{E}\left[\Delta_{x}F(W)\right]-\mathbb{E}\left[\frac{\partial\left(U^{i}F(W)\right)}{\partial x^{i}}\right]\nonumber \\
 & +\mathbb{E}\left[F_{j}(W)W^{i}\frac{\partial U^{j}}{\partial x^{i}}\right]-\mathbb{E}\left[\nu\frac{\partial F_{j}(W)}{\partial x^{k}}\frac{\partial W^{j}}{\partial x^{k}}\right]\nonumber \\
 & :=I_{1}+I_{2}+I_{3}+I_{4}.\label{eq:deriv eqn sum}
\end{align}
Let us calculate $I_{i}$'s on the right-hand side of Eq. (\ref{eq:deriv eqn sum}).
First, it is easy to see that by definition 
\begin{align}
I_{1} & =\nu\Delta_{x}\mathbb{E}\left[F(W)\right]=\int_{\mathbb{R}^{3}}F(w)\nu\Delta_{x}f(w;x,t)\dd w.\label{I-1}
\end{align}
For computing $I_{2}$, we shall use Eq. (\ref{eq:b-s form1}) and
obtain that 
\begin{align*}
I_{2} & =-\mathbb{E}\left[\frac{\partial(U^{i}F(W))}{\partial x^{i}}\right]=-\frac{\partial}{\partial x^{i}}\mathbb{E}\left[F(W)U^{i}\right]\\
 & =-\frac{\partial}{\partial x^{i}}\mathbb{E}\left[F(W)\varepsilon^{ijk}\int_{\mathbb{R}^{3}}\frac{1}{4\pi|y-x|}\frac{\partial}{\partial y^{j}}W^{k}(y,t)\dd y\right].
\end{align*}
To work out the expectation on the right-hand side, we may rewrite
the partial derivative as a limit: 
\[
\frac{\partial}{\partial y^{j}}W^{k}(y,t)=\lim_{h\rightarrow0}\frac{1}{h}\left(W^{k}(y+he^{(j)},t)-W^{k}(y,t)\right),
\]
where $e^{(j)}$ represents the unit vector with $j$-th component
equal to $1$, and the rest two components $0$, so that we can rewrite
$I_{2}$ as the following limit
\begin{equation}
I_{2}=-\frac{\partial}{\partial x^{i}}\lim_{h\rightarrow0}\frac{1}{h}\mathbb{E}\left[F(W)\varepsilon^{ijk}\int_{\mathbb{R}^{3}}\frac{1}{4\pi|y-x|}\left(W^{k}(y+he^{(j)},t)-W^{k}(y,t)\right)\dd y\right].\label{I-2-m}
\end{equation}
The expectation in this expression may be written in terms of two-point
joint distributions of $W$ as the following 
\[
\int_{\mathbb{R}^{3}}F(w)\left[\int\frac{\varepsilon^{ijk}}{4\pi|y-x|}\left(\int_{\mathbb{R}^{3}}w_{1}^{k}\left(f_{2}(w,w_{1};x,y+he^{(j)},t)-f_{2}(w,w_{1};x,y,t)\right)\dd w_{1}\right)\dd y\right]\dd w.
\]
The inner integral against the variable $w_{1}$ equals 
\[
f(w;x,t)\left(\mu(x,y+he^{(j)},w,t)-\mu(x,y,w,t)\right).
\]
After substituting this in Eq. \eqref{I-2-m} and sending $h\rightarrow0$,
we obtain that 
\begin{equation}
I_{2}=\int_{\mathbb{R}^{3}}F(w)\left[-\frac{\partial}{\partial x^{i}}\left(f(w;x,t)\int_{\mathbb{R}^{3}}\frac{\varepsilon^{ijk}}{4\pi|y-x|}\frac{\partial}{\partial y^{j}}\mu^{k}(x,y,w,t)\dd y\right)\right]\dd w.\label{I-2}
\end{equation}
Next we deal with $I_{3}$. Again, using (\ref{eq:b-s form1}), we
may write 
\[
\frac{\partial}{\partial x^{j}}U^{i}(x,t)=\int_{\mathbb{R}^{3}}\frac{y^{j}-x^{j}}{4\pi|y-x|^{3}}\varepsilon^{ilk}\frac{\partial}{\partial y^{l}}W^{k}(y,t)\dd y,
\]
so that 
\begin{align*}
I_{3} & =\mathbb{E}\left[F_{i}(W)W^{j}\frac{\partial U^{i}}{\partial x^{j}}\right]\\
 & =\mathbb{E}\left[F_{i}(W)W^{j}\int_{\mathbb{R}^{3}}\frac{y^{j}-x^{j}}{4\pi|y-x|^{3}}\varepsilon^{ilk}\frac{\partial}{\partial y^{l}}W^{k}(y,t)\dd y\right],
\end{align*}
which can be evaluated by using the two-point joint PDF. Indeed, we
may repeat the same idea as in the computation of $I_{2}$, to obtain
that
\begin{equation}
I_{3}=\int_{\mathbb{R}^{3}}F(w)\frac{\partial}{\partial w^{i}}\left[-f(w;x,t)\varepsilon^{ilk}w^{j}\int_{\mathbb{R}^{3}}\frac{y^{j}-x^{j}}{4\pi|y-x|^{3}}\frac{\partial}{\partial y^{l}}\mu^{k}(x,y,w,t)\dd y\right]\dd w.\label{I-3}
\end{equation}
Similarly, for the last term $I_{4}$, we write 
\begin{align*}
I_{4} & =-\nu\sum_{k=1}^{3}\mathbb{E}\left[\frac{\partial F_{i}(W)}{\partial x^{k}}\frac{\partial W^{i}}{\partial x^{k}}\right]\\
 & =-\nu\sum_{k=1}^{3}\lim_{h\rightarrow0}\frac{1}{h^{2}}\mathbb{E}\left[\left(F_{i}(W(x+he^{(k)},t))-F_{i}(W(x,t))\right)\left(W^{i}(x+he^{(k)},t)-W^{i}(x,t)\right)\right]\\
 & =-\nu\sum_{k=1}^{3}\lim_{h\rightarrow0}\frac{1}{h^{2}}\iint_{\mathbb{R}^{3}\times\mathbb{R}^{3}}\left(F_{i}(w_{2})-F_{i}(w_{1})\right)\left(w_{2}^{i}-w_{1}^{i}\right)f_{2}(w_{1},w_{2},x,x+he^{(k)},t)\dd w_{1}\dd w_{2}\\
 & =\nu\int_{\mathbb{R}^{3}}F_{i}(w)\lim_{h\rightarrow0}\frac{1}{h^{2}}\sum_{k=1}^{3}\left\{ f(w;x+he^{(k)},t)\int_{\mathbb{R}^{3}}\left(w_{1}^{i}-w^{i}\right)f_{2|1}(w_{1};w,x+he^{(k)},x,t)\dd w_{1}\right.\\
 & +\left.f(w;x,t)\int_{\mathbb{R}^{3}}\left(w_{1}^{i}-w^{i}\right)f_{2|1}(w_{1};w,x,x+he^{(k)},t)\dd w_{1}\right\} \dd w\\
 & =\nu\int_{\mathbb{R}^{3}}F_{i}(w)\lim_{h\rightarrow0}\frac{1}{h^{2}}I_{4}^{h}dw,
\end{align*}
where for simplicity we have introduced the notation
\[
\begin{aligned}I_{4}^{h}:= & \sum_{k=1}^{3}\left\{ \left(f(w;x+he^{(k)},t)-f(w;x,t)\right)\mu^{i}(x+he^{(k)},x,w,t)\right.\\
 & +\left.f(w;x,t)\left(\mu^{i}(x,x+he^{(k)},w,t)+\mu^{i}(x+he^{(k)},x,w,t)\right)\right\} .
\end{aligned}
\]
$I_{4}^{h}$ can be evaluated by using (\ref{eq:mean asy}) when $h$
is sufficiently small, by which we mean that 
\[
\mu^{i}(x,x+he^{(k)},w,t)=a_{k}^{i}(x,w,t)h+b_{kk}^{i}(x,w,t)h^{2}+o\left(h^{2}\right)
\]
and 
\[
\mu^{i}(x+he^{(k)},x,w,t)=-a_{k}^{i}(x+he^{(k)},w,t)h+b_{kk}^{i}(x+he^{(k)},w,t)h^{2}+o\left(h^{2}\right).
\]
Consequently, we have 
\begin{align*}
\mu^{i}(x,x+he^{(k)},w,t)+\mu^{i}(x+he^{(k)},x,w,t) & =-\left(a_{k}^{i}(x+he^{(k)},w,t)-a_{k}^{i}(x,w,t)\right)h\\
 & +\left(b_{kk}^{i}(x,w,t)+b_{kk}^{i}(x+he^{(k)},w,t)\right)h^{2}+o\left(h^{2}\right),
\end{align*}
which yields that 
\[
\lim_{h\rightarrow0}\frac{1}{h^{2}}I_{4}^{h}=-\frac{\partial}{\partial x^{k}}\left(fa_{k}^{i}\right)+2f\sum_{k}b_{kk}^{i}.
\]
and we may conclude that 
\begin{align}
I_{4} & =-\nu\int_{\mathbb{R}^{3}}F_{i}(w)\left[\frac{\partial}{\partial x^{k}}\left(fa_{k}^{i}\right)-2f\sum_{k}b_{kk}^{i}\right]\dd w\nonumber \\
 & =\int_{\mathbb{R}^{3}}F(w)\frac{\partial}{\partial w^{i}}\left[\nu\frac{\partial}{\partial x^{k}}\left(fa_{k}^{i}\right)-2\nu f\sum_{k}b_{kk}^{i}\right]\dd w.\label{I-4}
\end{align}
The PDF equation \eqref{eq:vot-pdf1} now follows by substituting
(\ref{I-1}, \ref{I-2}, \ref{I-3}, \ref{I-4}) into (\ref{eq:deriv eqn sum}).
\end{proof}
\begin{rem}
The partial differential equation (\ref{eq:vot-pdf1}), although it
appears linear, is in fact highly non-linear, and more importantly,
the coefficients $a,b,B$ and $D$ that define the partial differential
equation cannot be in general determined by the PDF $f$ alone. We
have seen that these functions are functionals of the conditional
average function $\mu$, hence the PDF $f$ cannot be obtained solely
through the PDF equation. However, the PDF PDE provides an appealing
and new method for modeling the PDF $f$ based on the modeling of
$\mu$ alone, we will provide in this work several results on modeling
the PDF $f$ based on the PDF equation we just derived. 
\end{rem}

\begin{rem}
If all coefficients $a,b,B$ and $D$ are considered as given, then
we can pose the initial value problem for solving the PDF PDE \eqref{eq:vot-pdf1}.
The good news is that then the PDF PDE is a scalar linear partial
differential equation of second order, although in general it is a
mixed type of partial differential equations of second order on 6
dimensional space. To the best knowledge of the authors of the present
paper, this kind of partial differential equations has not been studied
systemically in the existing literature. 
\end{rem}

\subsection{The distribution of turbulence}

As an application of the PDF PDE obtained in the previous sub-section,
we address a long-standing question in turbulence about the distribution
of turbulent flows. It has been conjectured and verified by measurements
over many years (cf. \citep{Batchelor1953} and \citep{Davidson2004})
that the distribution of a genuine turbulent flow cannot be Gaussian,
while a mathematical proof for this statement, to the best knowledge
of the present authors, is not available yet. By using the PDF PDE
we are able to prove the following
\begin{thm}
Consider an incompressible viscous (with viscosity $\nu>0$) turbulent
flow with vorticity $W(x,t)$. Suppose

1) the mean vorticity is constant at any instance,

2) $\{W(x,t)\}$ is weakly isotropic in the sense that 
\[
\mathbb{E}\left[\nabla W(x,t)|W(x,t)=w\right]=0
\]
for all $x,w$ and $t\geq0$, and

3) the distribution of $\{W(x,t)\}$ is Gaussian.

Then the distribution of $W(x,t)$ is independent of $x$. 
\end{thm}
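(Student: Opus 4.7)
The plan is to reduce the theorem to showing that the covariance matrix of the Gaussian vorticity field is independent of $x$, and then to extract this fact from assumption 2 combined with the joint Gaussianity of the vorticity and its spatial gradient. The full PDF PDE \eqref{eq:vot-pdf1} does not need to be invoked directly in the proof; its role is rather to supply the meaning of the coefficient $a_{k}^{i}(x,w,t)$ of \eqref{eq:a_k defn}, which is precisely the quantity that assumption 2 sets to zero.

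First I would record what assumption 3 buys us. Since $\{W(x,t)\}$ is a Gaussian random field, every finite-dimensional marginal is multivariate Gaussian; in particular, $W(x,t)$ has a Gaussian law with some mean $m(x,t)\in\mathbb{R}^{3}$ and covariance matrix $\Sigma(x,t)\in\mathbb{R}^{3\times3}$. Assumption 1 gives $m(x,t)=m(t)$. Because a multivariate Gaussian density is completely determined by its mean and covariance, the theorem reduces to proving
$$\frac{\partial}{\partial x^{k}}\Sigma^{ij}(x,t)=0\qquad\text{for all }i,j,k.$$

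Next I would combine assumption 2 with the joint Gaussianity of the pair $(W(x,t),\nabla W(x,t))$, which follows because partial differentiation is a linear operation and limits of jointly Gaussian vectors are jointly Gaussian. For any jointly Gaussian pair $(X,Y)$, the conditional mean $\mathbb{E}[Y\mid X=w]$ is an affine function of $w$, and its vanishing for every $w$ forces both $\mathbb{E}[Y]=0$ and $\mathrm{Cov}(Y,X)=0$. Applied to $X=W(x,t)$ and $Y=\partial W^{i}/\partial x^{k}(x,t)$, assumption 2 yields
$$\mathrm{Cov}\!\left(\frac{\partial W^{i}}{\partial x^{k}}(x,t),\,W^{j}(x,t)\right)=0\qquad\text{for all }i,j,k.$$
Differentiating the identity $\Sigma^{ij}(x,t)=\mathbb{E}[(W^{i}(x,t)-m^{i}(t))(W^{j}(x,t)-m^{j}(t))]$ with respect to $x^{k}$ and interchanging derivative with expectation (legitimate by the regularity and decay hypotheses of Sec.~\ref{Sec2.1}) produces a sum of two cross-covariance terms of exactly the above form, so the derivative vanishes and $\Sigma(x,t)=\Sigma(t)$ is independent of $x$, completing the proof.

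The only delicate step is the passage from \emph{conditional mean vanishes pointwise in $w$} to \emph{cross-covariance vanishes}; this is where Gaussianity is doing essential work. Without assumption 3, assumption 2 would only say that $\nabla W(x,t)$ is uncorrelated with every bounded function of $W(x,t)$, which would not constrain the $x$-dependence of $\Sigma$. It is worth noting in passing that the same affine-in-$w$ analysis also gives $\mathbb{E}[\partial W^{i}/\partial x^{k}(x,t)]=0$, so that assumption 1 is actually a consequence of assumptions 2 and 3 and is included only for transparency.
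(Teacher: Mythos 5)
Your proof is correct, but it takes a genuinely different route from the paper. The paper proves the theorem as an application of the PDF PDE \eqref{eq:vot-pdf1}: it inserts the Gaussian ansatz \eqref{phi-1} into the equation satisfied by $\psi=\ln f$ (with $a_{k}^{i}=0$ by weak isotropy), and observes that the right-hand side is a polynomial of degree four in $u=w-m$ whose only quartic term is $\frac{1}{2}\nu\left|u^{T}\nabla_{x}\Sigma_{x}^{-1}u\right|^{2}$, while the left-hand side is quadratic in $u$; hence that term vanishes identically, giving $\nabla_{x}\Sigma_{x}^{-1}=0$ and the conclusion. You bypass the PDE entirely: joint Gaussianity of $(W(x,t),\nabla W(x,t))$ makes the conditional mean in assumption 2 affine in $w$, so its vanishing for every $w$ forces $\mathrm{Cov}\bigl(\partial_{x^{k}}W^{i}(x,t),W^{j}(x,t)\bigr)=0$ (here you implicitly use invertibility of $\Sigma_{x}$, which is guaranteed by the standing assumption of a positive density), and differentiating $\Sigma^{ij}(x,t)$ under the expectation then shows it has no $x$-dependence. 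What your argument buys: it is more elementary, it uses neither the vorticity equation nor $\nu>0$ (whereas the paper's decisive quartic term carries the factor $\nu$), it isolates the statement as a purely probabilistic fact about Gaussian fields satisfying assumption 2, and, as you note, it shows assumption 1 is redundant by the tower property. What the paper's argument buys: it demonstrates the PDF PDE in action, which is the declared purpose of that subsection, and keeps the non-Gaussianity claim tied to the turbulence model rather than to a general Gaussian-field identity; on the other hand it needs the additional bookkeeping that $B$, $D$ and $b$ are of low degree in $u$, which your route avoids.
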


\begin{proof}
Since $\left\{ W(x,t):x\in\mathbb{R}^{d}\textrm{ and }t\geq0\right\} $
is a Gaussian random field, for $x\neq y$, the joint law of $W(x,t)$
and $W(y,t)$ is a normal distribution on $\mathbb{R}^{6}$ whose
co-variance matrix $\Sigma(x,y)$ may be decomposed into blocks 
\[
\Theta(x,y)=\left(\begin{array}{cc}
\Sigma_{x} & \Sigma_{x,y}\\
\Sigma_{y,x} & \Sigma_{y}
\end{array}\right),
\]
where the dependence on $t$ is suppressed for simplicity. The PDF
of $W(x,t)$ is given by
\[
f(w;x,t)=\frac{1}{(2\pi)^{3/2}\sqrt{\det\Sigma_{x}}}\exp\left(-\frac{1}{2}(w-m)^{T}\Sigma_{x}^{-1}(w-m)\right),
\]
where $m$ denotes the mean vorticity, so that
\begin{equation}
\ln f(w;x,t)=-\frac{3}{2}\ln(2\pi)-\frac{1}{2}\ln\det\Sigma_{x}-\frac{1}{2}(w-m)^{T}\Sigma_{x}^{-1}(w-m).\label{phi-1}
\end{equation}
Since under the assumption $a_{k}^{i}=0$, $\psi=\ln f$ according
to PDF PDE \eqref{eq:vot-pdf1} must satisfy the following equation:
\begin{align*}
\frac{\partial\psi}{\partial t} & =\nu\Delta_{x}\psi-B^{i}\frac{\partial\psi}{\partial x^{i}}+\nu|\nabla_{x}\psi|^{2}\\
 & +\left(D^{i}-2\nu b^{i}\right)\frac{\partial\psi}{\partial w^{i}}+\frac{\partial\left(D^{i}-2\nu b^{i}\right)}{\partial w^{i}}-\nabla_{x}\cdot B
\end{align*}
(which is the PDF PDE for $\psi=\ln f$ in the case where $a_{k}^{i}$
vanish). Using \eqref{phi-1} and by a lengthy but completely elementary
computation, we deduce that
\begin{align*}
-2\frac{\partial\psi}{\partial t} & =\nu\Delta_{x}\ln\det\Sigma_{x}-\frac{1}{2}\nu\left|\nabla_{x}\ln\det\Sigma_{x}\right|^{2}+4\nu\frac{\partial b^{i}}{\partial w^{i}}\\
 & -B^{i}\frac{\partial\ln\det\Sigma_{x}}{\partial x^{i}}-2\frac{\partial D^{i}}{\partial w^{i}}+2\nabla_{x}\cdot B+\nu u^{T}\Delta_{x}\Sigma_{x}^{-1}u\\
 & -\nu u^{T}\left(\sum_{i}\frac{\partial\ln\det\Sigma_{x}}{\partial x^{i}}\frac{\partial\Sigma_{x}^{-1}}{\partial x^{i}}\right)u-4\nu b^{i}(\Sigma_{x}^{-1})_{i\beta}u^{\beta}\\
 & +2D^{i}(\Sigma_{x}^{-1})_{i\beta}u^{\beta}-B^{i}u^{T}\frac{\partial\Sigma_{x}^{-1}}{\partial x^{i}}u-\frac{1}{2}\nu\left|u^{T}\nabla_{x}\Sigma_{x}^{-1}u\right|^{2},
\end{align*}
where $u=w-m$. The last equation looks complicated, but the important
point in our argument is the observation that, the left-hand side
$-2\frac{\partial\psi}{\partial t}$ is a quadratic function of $u$,
while the right-hand side is a polynomial in $u$ of degree $4$,
and only the last term on the right-hand side has order 4 in $u$.
Therefore 
\[
\frac{1}{2}\nu\left|u^{T}\nabla_{x}\Sigma_{x}^{-1}u\right|^{2}=0
\]
for every $u$, which yields that $\nabla_{x}\Sigma_{x}^{-1}=0$,
so that $\Sigma_{x}$ is independent of $x$. This completes the proof.
\end{proof}

\section{Inviscid fluid flows}

There is a significant simplification in PDF PDE for an inviscid ``turbulent''
flow, although such turbulence may not exist in nature. For inviscid
fluid flows, the velocity $U(x,t)$ satisfies the Euler equations
\[
\frac{\partial}{\partial t}U^{i}+U^{j}\frac{\partial U^{i}}{\partial x^{j}}=-\nabla P,\quad\frac{\partial U^{i}}{\partial x^{i}}=0.
\]
The PDF $f(w;x,t)$ of $W=\nabla\wedge U$ satisfies the (non-linear)
transport differential equation 
\begin{equation}
\frac{\partial f}{\partial t}+\frac{\partial}{\partial x^{i}}\left(fB^{i}\right)=\frac{\partial}{\partial w^{i}}\left(fD^{i}\right),\label{eq: inviscid1}
\end{equation}
where $B$ and $D$ are given as in (\ref{eq:B-vector}) and (\ref{eq:D-vector})
respectively.

The following theorem provides a mathematical tool for modeling PDF
of an inviscid turbulent flow. 
\begin{thm}
\label{thm:Invicid case} Consider PDE \eqref{eq: inviscid1} where
the data $B$ and $D$ are assumed as given. Assume that $B$ and
$D$ are Lipschitz continuous in $(x,w)$ uniformly in $t>0$. Suppose
$f(w;x,t)$ is a solution to \eqref{eq: inviscid1} with continuous
initial data $f(w;x,0)$, and suppose $fD$ decays to zero sufficiently
fast as $|w|\to\infty$. Then we have the followings:

1) $f(w;x,0)\geq0$ for all $x\in\mathbb{R}^{3}$, then $f(w;x;t)\geq0$
for all $x\in\mathbb{R}^{3}$ and $t>0$.

2) Suppose $\int_{\mathbb{R}^{3}}f(w;x,0)\mathrm{d}w=1$ for all $x\in\mathbb{R}^{3}$,
then $\int_{\mathbb{R}^{3}}f(w;x,t)\mathrm{d}w=1$ for all $x\in\mathbb{R}^{3}$
and $t>0$, if and only if 
\begin{equation}
\frac{\partial}{\partial x^{i}}\int_{\mathbb{R}^{3}}B^{i}(x,w,t)f(w;x,t)\mathrm{d}w=0\label{eq:inviscid condition}
\end{equation}
for all $x\in\mathbb{R}^{3}$ and $t>0$. 
\end{thm}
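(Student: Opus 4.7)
The plan is to recognize \eqref{eq: inviscid1} as a continuity equation on the enlarged phase space $\mathbb{R}^{3}_{x}\times\mathbb{R}^{3}_{w}$ with drift field $V(x,w,t)=(B(x,w,t),-D(x,w,t))$, and to handle statement (1) by the method of characteristics and statement (2) by integrating out the $w$ variable.

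For (1), I would expand the divergences in \eqref{eq: inviscid1} to the non-conservative form
$$\frac{\partial f}{\partial t}+B^{i}\frac{\partial f}{\partial x^{i}}-D^{i}\frac{\partial f}{\partial w^{i}}=\left(\frac{\partial D^{i}}{\partial w^{i}}-\frac{\partial B^{i}}{\partial x^{i}}\right)f.$$
Since $B$ and $D$ are uniformly Lipschitz in $(x,w)$, they grow at most linearly, and the characteristic ODE system $\dot{X}^{i}=B^{i}(X,W,s)$, $\dot{W}^{i}=-D^{i}(X,W,s)$ has unique, globally defined trajectories through every initial point $(x_0,w_0)$. Along any such trajectory, $g(s):=f(X(s),W(s),s)$ satisfies a scalar linear ODE whose solution is $g(0)$ multiplied by the exponential of the time integral of $\frac{\partial D^{i}}{\partial w^{i}}-\frac{\partial B^{i}}{\partial x^{i}}$. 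Because this exponential factor is strictly positive, $f(\cdot,\cdot,t)$ has the same sign as $f(\cdot,\cdot,0)$, which yields the non-negativity propagation claimed in (1).

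For (2), I would set $m(x,t):=\int_{\mathbb{R}^{3}}f(w;x,t)\dd w$ and integrate \eqref{eq: inviscid1} in $w$ over $\mathbb{R}^{3}$. The right-hand side is a total $w$-divergence of $fD$, so the decay hypothesis together with the divergence theorem eliminates it, leaving
$$\frac{\partial m}{\partial t}(x,t)+\frac{\partial}{\partial x^{i}}\int_{\mathbb{R}^{3}}B^{i}(x,w,t)f(w;x,t)\dd w=0.$$
If \eqref{eq:inviscid condition} holds, the spatial flux term vanishes identically and $m(x,t)\equiv m(x,0)=1$; conversely, if $m\equiv 1$ then $\frac{\partial m}{\partial t}\equiv 0$, forcing the flux divergence to vanish, which is exactly \eqref{eq:inviscid condition}.

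The main technical point I anticipate is justifying differentiation under the integral and the vanishing of boundary terms at $|w|=\infty$; these follow from the stated decay of $fD$ together with the smoothness of $f$, and a careful write-up would introduce a truncation in $w$, pass to the limit, and invoke dominated convergence. A secondary subtlety is that the characteristic argument nominally uses pointwise values of $\frac{\partial B^{i}}{\partial x^{i}}$ and $\frac{\partial D^{i}}{\partial w^{i}}$ even though $B$ and $D$ are only Lipschitz; this is bypassed by keeping \eqref{eq: inviscid1} in divergence form and obtaining the ODE for $g(s)$ directly via the chain rule applied to the smooth solution $f$, so no regularity of $B$ and $D$ beyond what is assumed is actually needed.
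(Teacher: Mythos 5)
Your proposal is correct and follows essentially the same route as the paper: part (1) via the characteristic system for the drift $(B,-D)$ and the resulting representation of $f$ as its initial value times a positive exponential of the integrated divergence $\frac{\partial D^{i}}{\partial w^{i}}-\frac{\partial B^{i}}{\partial x^{i}}$, and part (2) by integrating \eqref{eq: inviscid1} in $w$ and using the decay of $fD$. The only cosmetic difference is that you run the characteristics forward from time $0$ while the paper traces them backward from $(x,w,t)$, and note that your closing remark does not really avoid the divergences of $B$ and $D$, since they appear in the exponential factor in both versions.
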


\begin{proof}
Let us define the integral curves $X$ and $Y$ by the following ordinary
differential equation system: 
\[
\begin{cases}
\frac{\dd}{\dd s}X^{i}(t,s)=-B^{i}(X(t,s),Y(t,s),t-s)), & X(t,0)=x,\\
\frac{\dd}{\dd s}Y^{i}(t,s)=D^{i}(X(t,s),Y(t,s),t-s)), & Y(t,0)=w.
\end{cases}
\]
Define
\[
h(s)=f(Y(t,s);X(t,s),t-s)e^{N(w;x,s)}
\]
for all $s\in[0,t]$, where 
\[
N(w;x,s):=\int_{0}^{s}\left(\frac{\partial D^{i}}{\partial w^{i}}-\frac{\partial B^{i}}{\partial x^{i}}\right)(X(t,r);Y(t,r),t-r)\dd r.
\]
Then clearly $h(0)=f(w;x,t)$ and 
\[
h(t)=f(Y(t,t);X(t,t),0)e^{N(w;x,t)}.
\]
It is clear that 
\begin{align*}
\frac{\dd}{\dd s}h(s)= & e^{N}\frac{\partial f}{\partial w^{i}}\frac{\dd}{\dd s}Y^{i}(t,s)+e^{N}\frac{\partial f}{\partial x^{i}}\frac{\dd}{\dd s}X^{i}(t,s)\\
 & -e^{N}\frac{\partial}{\partial t}f+e^{N}f\frac{\partial D^{i}}{\partial w^{i}}-e^{N}f\frac{\partial B^{i}}{\partial x^{i}}\\
= & e^{N}\left\{ \frac{\partial(fD^{i})}{\partial w^{i}}-\frac{\partial(B^{i}f)}{\partial x^{i}}-\frac{\partial}{\partial t}f\right\} \\
= & 0.
\end{align*}
Integrating with respect to the variable $s$ over $[0,t]$, we may
conclude that 
\[
f(w;x,t)=f(Y(t,t);X(t,t),0)\exp\left[\int_{0}^{t}\left(\frac{\partial D^{i}}{\partial w^{i}}-\frac{\partial B^{i}}{\partial x^{i}}\right)(X(t,s);Y(t,s),t-s)\dd s\right].
\]

The conservation of positivity property 1) follows immediately from
this representation. To show 2) we consider the function $u(x,t)=\int_{\mathbb{R}^{3}}f(w;x,t)\dd w$
for $x\in\mathbb{R}^{3}$ and $t\geq0$. If $u(x,t)=1$ for all $x$
and $t>0$, then by integrating \eqref{eq: inviscid1} over $\mathbb{R}^{3}$
we obtain \eqref{eq:inviscid condition}. Conversely, if \eqref{eq:inviscid condition}
holds, by integrating \eqref{eq: inviscid1} over $\mathbb{R}^{3}$
then $\frac{\partial}{\partial t}u(x,t)=0,$which yields that $u(x,t)=1$. 
\end{proof}

\section{Weakly homogeneous and weakly isotropic flows}

It has been pointed out in Secs. 2 and 3, the coefficients appearing
in the PDF PDE are determined by the conditional mean function $\mu(x,y,w,t)$
defined in (\ref{eq:mu-average}). The significance of this statistical
quantity of a turbulent flow has been demonstrated in the derivation
of the PDF PDE. By definition the conditional mean describes the deviation
from the local isotropicity of the turbulent flow, and therefore this
statistical characteristic can be used in the classification of turbulent
flows. In this section, we define weak homogeneous and weak isotropic
turbulent flows, then study the PDF equation for these turbulent flows.

The homogeneity and the isotropicity can be defined in general for
random fields indexed by a space variable $x\in\mathbb{R}^{d}$, which
has been introduced into the study of turbulence by G. I. Taylor \citep{Taylor1935}.
The local homogeneous and local isotropic flows were introduced by
Kolmogorov for formulating K41 theory (and its improved version K62
theory). According to Kolmogorov \citep{K41-b,K41-a}, a random field
$\{Z(x,t)\}$ is locally homogeneous if the conditional distribution
of $Z(y,t)-Z(x,s)$ given $Z(x,s)=z$ is independent of $(z,x,s)$
for any $t\geq s$, and further it is locally isotropic if the conditional
distribution of $Z(y,t)-Z(x,s)$ are invariant under reflections and
rotations. The usage of conditional mean function makes it possible
to generalize these concepts to their weak versions.

We may simplify the PDF PDE for turbulent flows when the vorticity
random field $W(x,t)$ is \emph{weakly homogeneous} in the sense that
for every pair $(y,x)$ and $t>0$, the conditional mean of $W^{i}(y,t)-W^{i}(x,t)$
given $W(x,t)=w$ depends only on $y-x$, $w$ and $t>0$ but independent
of $x$, so that we may write 
\[
\mu^{i}(x,y,w,t)=\beta^{i}(y-x,w,t),
\]
where $\beta^{i}(0,w,t)=0$. Then 
\[
a_{k}^{i}(x,w,t)=\frac{\partial\beta^{i}}{\partial x^{k}}(0,w,t),
\]
and 
\[
b^{i}(x,w,t)=\Delta\beta^{i}(0,w,t),
\]
which are still denoted by $a_{k}^{i}(w,t)$ and $b^{i}(w,t)$, though
they are independent of $x$. Similarly
\[
B^{i}(w,t)=\int_{\mathbb{R}^{3}}\frac{1}{4\pi|y|}\varepsilon^{ijk}\frac{\partial\beta^{k}}{\partial y^{j}}(y,w,t)\dd y,
\]
and 
\[
D^{i}(w,t)=w^{l}\int_{\mathbb{R}^{3}}\frac{y^{l}}{4\pi|y|^{3}}\varepsilon^{ijk}\frac{\partial\beta^{k}}{\partial y^{j}}(y,w,t)\dd y,
\]
which again are functions of $(w,t)$ only. Therefore the PDF PDE
can be simplified to be the following mixed type PDE

\[
\left(\frac{\partial}{\partial t}+B^{i}\frac{\partial}{\partial x^{i}}-\nu\Delta\right)f=\nu\frac{\partial}{\partial w^{i}}\left(a_{k}^{i}\frac{\partial f}{\partial x^{k}}\right)+\frac{\partial}{\partial w^{i}}\left(fA^{i}\right),
\]
where $A^{i}=D^{i}-2\nu b^{i}$ and $i=1,2,3$.

We say that the vorticity $W$ is\emph{ weakly isotropic}, if 
\[
\mathbb{E}\left[\nabla W(x,t)|W(x,t)=w\right]=0
\]
for all $x,w$ and $t$. By definition, $W$ is locally isotropic
in Kolmogorov's sense, then it is weakly isotropic.

If $W$ is weakly homogeneous and weakly isotropic, then $a=0$, $B$
and $A$ depend on $(w,t)$ only, so for this case, the PDF PDE becomes
a parabolic-transport equation 
\begin{equation}
\left(\frac{\partial}{\partial t}+B^{i}\frac{\partial}{\partial x^{i}}-\nu\Delta\right)f=\frac{\partial}{\partial w^{i}}\left(fA^{i}\right).\label{eq:par-trans1}
\end{equation}

After we have deduced the PDE \eqref{eq:par-trans1}, there is no
need to assume that $B$ and $A$ are independent of $x$. Therefore
we may consider the following PDE 
\begin{equation}
\left(\frac{\partial}{\partial t}+B^{i}\frac{\partial}{\partial x^{i}}+\frac{\partial B^{i}}{\partial x^{i}}-\nu\Delta\right)f=\frac{\partial}{\partial w^{i}}\left(fA^{i}\right).\label{eq:par-trans1-1}
\end{equation}
where both $B$ and $A$ are functions of three variables $x$, $w$
and $t$.

The following theorem provides the foundation for modeling weakly
isotropic turbulent flows based on the vorticity PDF. 
\begin{thm}
\label{thm:weak HI } Consider the parabolic-transport differential
equation \eqref{eq:par-trans1-1}, where $A$ and $B$ are considered
as given. Assume that $A(x,w,t)$ and $B(x,w,t)$ are Lipschitz continuous
in space variable $(x,w)$ uniformly in $t$. Suppose $f(w;x,t)$
is a solution to \eqref{eq:par-trans1-1} such that $fA$ is continuous
and decays to zero sufficiently fast as $|w|\to\infty$. 

1) If $f(w;x,0)\geq0$ for all $x$ and $w$, then $f(w;x,t)\geq0$
for all $t\geq0$, $x$ and $w$.

2) If $f(w;x,0)$ is a PDF for every $x,$ i.e. it is non-negative
and $\int_{\mathbb{R}^{3}}f(w;x,0)\dd w=1$ for every $x$, then so
are $f(w;x,t)$ for all $x$ and $t>0$ if and only if 
\begin{equation}
\frac{\partial}{\partial x^{i}}\int_{\mathbb{R}^{3}}B^{i}(x,w,t)f(w;x,t)\dd w=0\label{eq:iso-c2}
\end{equation}
for all $x$ and $t>0$.

3) Suppose in addition $B$ is independent of $w$ and $\frac{\partial B^{i}}{\partial x^{i}}=0$,
then if $f(w;x,0)$ is a PDF for every $x$, $f(w;x,t)$ is also a
PDF for every $x$ and for every $t>0$. 
\end{thm}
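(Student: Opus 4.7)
The plan is to treat part 1 via a stochastic duality argument, and parts 2 and 3 by integrating the PDE over $w \in \mathbb{R}^{3}$ and analysing the resulting equation for $U(x,t) := \int_{\mathbb{R}^{3}} f(w;x,t)\,\dd w$. The strategy parallels Theorem \ref{thm:Invicid case} but replaces the deterministic characteristics by stochastic ones in order to handle the Laplacian term.

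For part 1, I would first rewrite \eqref{eq:par-trans1-1} in divergence form
\[
\frac{\partial f}{\partial t} + \frac{\partial}{\partial x^{i}}\bigl(B^{i} f\bigr) - \nu\Delta_{x} f = \frac{\partial}{\partial w^{i}}\bigl(f A^{i}\bigr),
\]
which identifies $f$ as a weak solution of a Fokker--Planck equation on $\mathbb{R}^{3}_{x}\times\mathbb{R}^{3}_{w}$ with drift $B$ in $x$, drift $-A$ in $w$, and diffusion matrix $2\nu I$ acting only on the $x$-variable. For any smooth, nonnegative, compactly-supported test function $\varphi(x,w)$ I would solve the backward adjoint problem $\partial_{s} g + B^{i}\partial_{x^{i}} g - A^{i}\partial_{w^{i}} g + \nu\Delta_{x} g = 0$ with terminal datum $g(\cdot,\cdot,t)=\varphi$ via Feynman--Kac, setting $g(x,w,s) = \mathbb{E}[\varphi(X(t), Y(t))]$ for $(X,Y)$ the strong solution of $\dd X = B(X,Y,r)\,\dd r + \sqrt{2\nu}\,\dd W$, $\dd Y = -A(X,Y,r)\,\dd r$, starting from $(x,w)$ at time $s$; the uniform Lipschitz hypothesis guarantees well-posedness of this SDE. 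Since $\varphi \geq 0$, the representation yields $g \geq 0$. A direct computation of $\frac{\dd}{\dd t}\iint fg\,\dd x\,\dd w$ via integration by parts shows this derivative vanishes (the boundary contributions are killed by the compact support of $\varphi$ and the decay of $fA$), so
\[
\iint f(w;x,t)\,\varphi(x,w)\,\dd x\,\dd w = \iint f(w;x,0)\,g(x,w,0)\,\dd x\,\dd w \geq 0,
\]
and arbitrariness of $\varphi$ together with continuity of $f$ forces $f \geq 0$ pointwise.

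For part 2, integrating the PDE in $w$ and using the decay of $fA$ to discard the right-hand side yields
\[
\frac{\partial U}{\partial t} + \frac{\partial}{\partial x^{i}}\!\int_{\mathbb{R}^{3}} B^{i}(x,w,t)\, f(w;x,t)\,\dd w - \nu\Delta_{x} U = 0.
\]
If \eqref{eq:iso-c2} holds, the middle term disappears and $U$ satisfies the heat equation with initial datum $U(\cdot,0) \equiv 1$; Tychonoff uniqueness for bounded solutions then gives $U\equiv 1$. Conversely, $U\equiv 1$ makes both $\partial_{t} U$ and $\Delta_{x} U$ vanish identically, forcing \eqref{eq:iso-c2}. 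For part 3, when $B=B(x,t)$ is independent of $w$ and $\partial_{x^{i}} B^{i}=0$, the middle term reduces to $\partial_{x^{i}}(B^{i} U) = B^{i}\partial_{x^{i}} U$, so $U$ satisfies the advection--diffusion equation $\partial_{t} U + B^{i}\partial_{x^{i}} U - \nu\Delta_{x} U = 0$ with $U(\cdot,0) \equiv 1$; the constant function $U\equiv 1$ is manifestly a bounded solution, and uniqueness under the Lipschitz assumption on $B$ pins it down. Combined with part 1, this promotes $f(\cdot;x,t)$ to a genuine PDF for every $(x,t)$.

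The main technical obstacle will be justifying the integrations by parts in both the duality pairing (part 1) and the reduction to the scalar equation for $U$ (parts 2 and 3): one must verify that the Feynman--Kac solution $g$ inherits enough spatial decay and smoothness from $\varphi$ and from the coefficients $A,B$, and that $\int B^{i} f\,\dd w$ is finite and differentiable under the integral sign. These steps are consequences of the ``sufficiently fast decay'' hypotheses on $f$ and $fA$ combined with the uniform Lipschitz bounds on $A$ and $B$, but they require careful verification rather than being automatic.
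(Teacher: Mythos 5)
Your parts 2) and 3) follow essentially the same route as the paper: integrate \eqref{eq:par-trans1-1} in $w$, discard the right-hand side via the decay of $fA$, and read off the equation satisfied by $u(x,t)=\int_{\mathbb{R}^{3}}f(w;x,t)\,\dd w$ --- the heat equation when \eqref{eq:iso-c2} holds, and $\partial_{t}u+B^{i}\partial_{x^{i}}u-\nu\Delta u=0$ when $B$ is $w$-independent and divergence free --- then conclude $u\equiv1$ by uniqueness with $u(\cdot,0)\equiv1$ (the paper is equally silent on the uniqueness class, so your appeal to bounded/Tychonoff uniqueness is if anything more explicit). Part 1) is where you genuinely diverge. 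The paper applies It\^o's formula directly to $H_{s}=f(Y(t,s);X(t,s),t-s)e^{P(s)}$ along the time-reversed characteristics $\dd X=\sqrt{2\nu}\,\dd M-B\,\dd s$, $\dd Y=A\,\dd s$, with $P(s)=\int_{0}^{s}\partial_{w^{i}}A^{i}$, and obtains the explicit representation $f(w;x,t)=\mathbb{E}\bigl[f(Y(t,t);X(t,t),0)\exp\bigl(\int_{0}^{t}\partial_{w^{i}}A^{i}\bigr)\bigr]$, from which positivity is immediate; this formula is also what Sections 5 and 6 reuse (e.g.\ \eqref{eq:s5-s1}, \eqref{eq:m2-s0}). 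You instead solve the backward adjoint problem by Feynman--Kac and pair it with $f$; the adjoint operator and the integration-by-parts bookkeeping are correct, and the cancellation giving $\frac{\dd}{\dd s}\iint fg=0$ is right. The trade-off to be aware of: the paper's route only needs It\^o's formula for the given (classical) solution $f$, whereas yours needs $g$ to be a classical solution of a \emph{degenerate} backward equation (noise acts only in $x$, none in $w$), and that regularity does \emph{not} follow from Lipschitz continuity of $A,B$ alone, contrary to your closing remark --- it requires smoother coefficients (true in the paper's applications, where $A,B$ come from a smooth $\mu$) or an additional mollification/weak-formulation step. In partial compensation, your argument never differentiates $A$ in $w$ (no exponential weight), and it localizes via the compactly supported test function, but it forgoes the explicit representation of $f$ that the paper's method delivers.
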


\begin{proof}
The proof is rather similar to that of Theorem \ref{thm:Invicid case},
so we only provide an outline. Let $(M_{t}^{i})_{t\geq0}$ ($i=1,2,3$)
be a three dimensional standard Brownian motion on a probability space
$(\varOmega,\mathcal{F},\mathbb{P})$. For every fixed $t>0$ we run
the following SDE system: 
\[
\dd X^{i}(t,s)=\sqrt{2\nu}\dd M_{s}^{i}-B^{i}(X(t,s),Y(t,s),t-s)\dd s,\quad X(t,0)=x,
\]
\[
\dd Y^{i}(t,s)=A^{i}(X(t,s),Y(t,s),t-s)\dd s,\quad Y(t,0)=w,
\]
which has a unique solution running up to $t$. Apply It\^o's formula
to 
\[
H_{s}=f(Y(t,s);X(t,s),t-s)e^{P(s)},
\]
where 
\[
P(s)=\int_{0}^{s}\frac{\partial A^{i}}{\partial w^{i}}(X(t,r),Y(t,r),t-r)\dd r.
\]
Then 
\begin{align*}
\dd H_{s} & =e^{P}\left(\frac{\partial f}{\partial w^{i}}\dd Y_{s}^{i}+\frac{\partial f}{\partial x^{i}}\dd X_{s}^{i}+f\frac{\partial A^{i}}{\partial w^{i}}-\frac{\partial f}{\partial s}\dd s+\nu\Delta f\dd s\right)\\
 & =\sqrt{2\nu}e^{P}\frac{\partial f}{\partial x^{i}}\dd M_{s}^{i}
\end{align*}
so that 
\[
H_{0}=H_{t}-\sqrt{2\nu}e^{P}\frac{\partial f}{\partial x^{i}}\dd M_{s}^{i}.
\]
By taking expectation, one may deduce that the solution can be expressed
as 
\[
f(w;x,t)=\mathbb{E}\left[f(Y(t,t);X(t,t),0)\exp\left(\int_{0}^{t}\frac{\partial A^{i}}{\partial w^{i}}(X(t,s),Y(t,s),t-s)\dd s\right)\right].
\]
Statements 1) - 3) can be easily shown using this expression and the
argument in the proof of Theorem \ref{thm:Invicid case}. If $B$
is independent of $w$, then, by integrating \eqref{eq:par-trans1},
\[
\left(\frac{\partial}{\partial t}+B^{i}\frac{\partial}{\partial x^{i}}-\nu\Delta\right)u(x,t)=0,
\]
where $u(x,t)=\int_{\mathbb{R}^{3}}f(w;x,t)\dd w$. Thus 3) follows
from the uniqueness of the solution of the previous parabolic equation. 
\end{proof}

\section{Modeling PDF of weakly isotropic flows}

In this and the next sections, several simple models based on our
PDF equation for modeling vorticity distributions in turbulence are
discussed. Clearly, the most straightforward way for modeling the
distribution of vorticity is to assign the conditional mean function
$\mu(x,y,w,t)$. The other parameters in the PDF PDE may be determined
accordingly.

For practical reasons which will be clarified in our computations
below, the simplest yet not trivial model for $\mu(x,y,w,t)$ should
be a function of $|y-x|^{2}$ only and $\mu$ decays sufficiently
fast at infinity. For such a model, the parameters $B^{i}$ and $D^{i}$
appearing in the PDF PDE \eqref{eq:vot-pdf1} vanish identically.
The other parameters $a_{k}^{i}$ and $b_{jk}^{i}$ are determined
by the asymptotic condition as $|y-x|\rightarrow0$. For this reason,
we assign
\begin{equation}
\mu^{i}(x,y,w,t)=\begin{cases}
-\frac{C}{6\nu}|y-x|^{2}, & \textrm{ if }|x-y|<\delta,\\
h(|y-x|^{2}), & \textrm{ if }|x-y|\geq\delta,
\end{cases}\label{eq:model-s5}
\end{equation}
for $i=1,2,3$, where $\delta>0$ and $h$ is a $\mathcal{C}^{1}$-function
such that $h$ and its derivative decay sufficiently fast at infinity.
$C\geq0$ is a model parameter which should be reflecting certain
physical properties of the underlying flows. The other parameters
can be read out from this model easily: $a_{k}^{i}=0$, $b_{jk}^{i}=-\delta_{jk}\frac{C}{6\nu}$
and therefore $b^{i}=-\frac{C}{2\nu}$. With this model of $\mu$,
the PDF PDE is simplified to be the following parabolic-transport
equation: 
\begin{equation}
\left(\frac{\partial}{\partial t}-\nu\Delta\right)f=C\sum_{i}\frac{\partial f}{\partial w^{i}},\label{eq:pdfpde-s5}
\end{equation}

The solution to \eqref{eq:pdfpde-s5} has a nice probabilistic representation
which may be read out from the proof of Theorem \ref{thm:weak HI }.
In fact 
\begin{equation}
f(w;x,t)=\mathbb{E}\left[f(Y(t,t);X(t,t),0)\right]\label{eq:s5-s1}
\end{equation}
where $f(w;x,0)$ is the PDF of the initial vorticity, and $X$ and
$Y$ are solutions to the SDE: 
\[
\begin{cases}
\dd X^{i}(t,s)=\sqrt{2\nu}\dd M_{s}^{i}, & X(t,0)=x,\\
\dd Y^{i}(t,s)=C\dd s, & Y(t,0)=w,
\end{cases}
\]
where $M$ is a standard 3D Brownian motion. The solutions when $s=t$
are given by 
\[
Y^{i}(t,t)=w^{i}+Ct
\]
and 
\[
X(t,t)=x+\sqrt{2\nu}M_{t}.
\]
Substituting these into \eqref{eq:s5-s1} we obtain 
\begin{align}
f(w;x,t) & =\mathbb{E}\left[f(w+Ct;x+\sqrt{2\nu}M_{t},0)\right]\nonumber \\
 & =\mathbb{E}\left[f(w+Ct;x+\sqrt{2\nu t}\xi,0)\right],\label{eq:s5-s2}
\end{align}
where $\xi$ is a random vector with the standard 3D normal distribution
$N(0,I_{3})$. Clearly the representation \eqref{eq:s5-s2} may also
be written in terms of Gaussian density 
\begin{align}
f(w;x,t) & =\frac{1}{(4\pi\nu t)^{3/2}}\int_{\mathbb{R}^{3}}f\left(w+Ct,z,0\right)e^{-\frac{|z-x|^{2}}{4\nu t}}\dd z\nonumber \\
 & =\frac{1}{(2\pi)^{3/2}}\int_{\mathbb{R}^{3}}f\left(w+Ct,x+\sqrt{2\nu t}z,0\right)e^{-\frac{|z|^{2}}{2}}\dd z.\label{eq:s5-s3}
\end{align}
Both representations \eqref{eq:s5-s2} and \eqref{eq:s5-s3} may be
used to evaluate $f(w;x,t)$ by using for example Monte-Carlo scheme
by sampling Gaussian random variables.

The initial PDF $f(w;x,0)$ is determined by the initial vorticity
distribution. For a turbulent flow, the initial vorticity may be written
as a sum: 
\[
W(x,0)=\omega_{0}(x)+\varepsilon(x),
\]
where $\omega_{0}(x)$ is the initial mean vorticity at the location
$x$, and $\varepsilon(x)$ represents a small random perturbation.
It is reasonable, therefore, to assume that $\varepsilon(x)$ has
a normal distribution $N(0,\sigma(x)^{2}I_{3})$, where the variance
$\sigma(x)$ may or may not depend on the location $x$. However,
if $\varepsilon(x)$ is allowed to depend on the location $x$, then
the noise $\varepsilon(x)$ has to satisfy the divergence-free condition
as well, a technical issue we will not address here in detail. It
is reasonable to assume that the initial vorticity mean $\omega_{0}$
is distributed in a small region of the space. In particular $\omega_{0}$
decays to zero sufficiently fast near infinity. Under this assumption,
\begin{equation}
f(w;x,0)=\frac{1}{\left(2\pi\sigma(x)^{2}\right)^{3/2}}\exp\left[-\frac{|w-\omega_{0}(x)|^{2}}{2\sigma(x)^{2}}\right].\label{eq:int-w}
\end{equation}
By substituting this into the representation \eqref{eq:s5-s2}, we
obtain 
\begin{equation}
f(w;x,t)=\mathbb{E}\left[\frac{1}{\left(2\pi\sigma\left(x+\sqrt{2\nu t}\xi\right)^{2}\right)^{3/2}}e^{-\frac{\left|w+Ct-\omega_{0}(x+\sqrt{2\nu t}\xi)\right|^{2}}{2\sigma\left(x+\sqrt{2\nu t}\xi\right)^{2}}}\right].\label{eq:mod-01}
\end{equation}
It is easy to see that $f(w;x,t)$ is no longer Gaussian except for
special cases which we would like to discuss below.

If $\omega_{0}$ vanishes and if $\sigma$ is a positive small constant,
the initial distribution is homogeneous and equation \eqref{eq:mod-01}
leads to a simple expression 
\[
f(w;x,t)=\frac{1}{\left(2\pi\sigma^{2}\right)^{3/2}}\exp\left(-\frac{\left|w+Ct\right|^{2}}{2\sigma^{2}}\right),
\]
which is independent of the location $x$ and remains a Gaussian density.
The interesting feature about this model is that the variance stays
as the constant $\sigma^{2}$ but new mean vorticity $-Ct$ is created
evenly after duration $t$. This is the case of a turbulence with
a small constant random perturbation. For turbulent flows observed
in nature, the initial vorticity mean $\omega_{0}$ does exist and
does not vanish, and the random noise $\varepsilon(x)$, for simplicity,
may be modeled by a Gaussian random variable independent of $x$.
Therefore when $\sigma>0$ is a small constant, we have 
\begin{equation}
f(w;x,t)=\frac{1}{\left(2\pi\sigma^{2}\right)^{3/2}}\mathbb{E}\left[e^{-\frac{|w+Ct-\omega_{0}\left(x+\sqrt{2\nu t}\xi\right)|^{2}}{2\sigma^{2}}}\right],\label{eq:mod-01-1}
\end{equation}
where $\xi\sim N(0,I_{3})$. 

For this case the mean vorticity $\omega(x,t)$ at $(x,t)$ can be
evaluated. Indeed 
\begin{align*}
\omega(x,t) & =\int_{\mathbb{R}^{3}}\frac{w}{\left(2\pi\sigma^{2}\right)^{3/2}}\mathbb{E}\left[e^{-\frac{|w+Ct-\omega_{0}\left(x+\sqrt{2\nu t}\xi\right)|^{2}}{2\sigma^{2}}}\right]\dd w\\
 & =\mathbb{E}\left[\int_{\mathbb{R}^{3}}\frac{w}{\left(2\pi\sigma^{2}\right)^{3/2}}e^{-\frac{|w+Ct-\omega_{0}\left(x+\sqrt{2\nu t}\xi\right)|^{2}}{2\sigma^{2}}}\dd w\right]\\
 & =\mathbb{E}\left[\omega_{0}\left(x+\sqrt{2\nu t}\xi\right)\right]-Ct,
\end{align*}
and therefore 
\begin{equation}
\omega^{i}(x,t)=\int_{\mathbb{R}^{3}}\frac{\omega_{0}^{i}(y)}{\left(4\pi\nu t\right)^{3/2}}e^{-\frac{\left|y-x\right|^{2}}{4\nu t}}dy-Ct.\label{eq:mean-wt}
\end{equation}
This equality shows that the vorticity mean $\omega(x,t)$ under this
simple model is independent of the noise parameter $\sigma^{2}$,
and $\omega(x,t)$ evolves according to the heat type equation 
\[
\frac{\partial}{\partial t}\omega(x,t)-C=\nu\Delta\omega(x,t),\textrm{ }\omega(x,0)=\omega_{0}(x),
\]
which is a rather crude approximation to the mean vorticity equation.

\section{Heat flow method}

In this section we propose another model for the PDF of the vorticity,
based on the heat flow method, in which the conditional mean function
$\mu$ is generated by a random field $R(x,\tau)$. The random field
$R(x,\tau)$ evolves according to the heat flow 
\begin{equation}
\left(\frac{\partial}{\partial\tau}-\nu\Delta\right)R^{i}(x,\tau)=0,\label{eq:flow-01}
\end{equation}
where $R(x,\tau)=(R^{1}(x,\tau),R^{2}(x,\tau),R^{3}(x,\tau))$. The
initial value $R(x,0)=\xi(x)=(\xi^{i}(x))$ is a centered Gaussian
noise white in space in the sense that the co-variance of $\xi^{i}$
at two locations $u$ and $v$ in $\mathbb{R}^{3}$ is given by 
\begin{equation}
\mathbb{E}\left[\xi^{i}(u)\xi^{i}(v)\right]=\delta(u-v).\label{eq:s-white}
\end{equation}
Also, we assume that $\xi^{i}$'s are i.i.d. random variables. For
simplicity, we assume that $R^{i}$ is centered here, but our argument
can definitely be generalized to the case when it is non-centered.

The solution to equation (\ref{eq:flow-01}) is given by 
\begin{equation}
R^{i}(x,\tau)=\int_{\mathbb{R}^{3}}\frac{1}{(4\pi\nu\tau)^{\frac{3}{2}}}e^{-\frac{|x-y|^{2}}{4\nu\tau}}\xi^{i}(y)\mathrm{d}y,\label{eq:flow-2}
\end{equation}
so that $R(x,\tau)$ is a centered Gaussian random field with its
co-variance 
\[
\begin{aligned}\sigma_{\tau}(x,y)= & \int_{\mathbb{R}^{3}}\int_{\mathbb{R}^{3}}\frac{1}{(4\pi\nu\tau)^{3}}e^{-\frac{|x-u|^{2}+|y-v|^{2}}{4\nu\tau}}\mathbb{E}\left[\xi^{i}(u)\xi^{i}(v)\right]\mathrm{d}u\mathrm{d}v\\
= & \int_{\mathbb{R}^{3}}\int_{\mathbb{R}^{3}}\frac{1}{(4\pi\nu\tau)^{3}}e^{-\frac{|x-u|^{2}+|y-v|^{2}}{4\nu\tau}}\delta(u-v)\mathrm{d}u\mathrm{d}v,\\
= & e^{-\frac{|x-y|^{2}}{8\nu\tau}}\int_{\mathbb{R}^{3}}\frac{1}{(4\pi\nu\tau)^{3}}e^{-\frac{|u-\frac{(x+y)}{2}|^{2}}{2\nu\tau}}\mathrm{d}u\\
= & \frac{1}{8(2\pi\nu\tau)^{\frac{3}{2}}}e^{-\frac{|x-y|^{2}}{8\nu\tau}}.
\end{aligned}
\]
It follows that the conditional distribution of $R(y,\tau)$ given
$R(x,\tau)=w$ has a normal distribution with mean $e^{-\frac{|x-y|^{2}}{8\nu\tau}}w$
and co-variance matrix 
\[
\frac{1}{8(2\pi\nu\tau)^{\frac{3}{2}}}\left(1-e^{-\frac{|x-y|^{2}}{4\nu\tau}}\right)I_{3}.
\]
Therefore the conditional mean of $R(y,\tau)-R(x,\tau)$ given $R(x,\tau)=w$
can be easily found to be 
\[
-\left(1-e^{-\frac{|x-y|^{2}}{8\nu\tau}}\right)w,
\]
which will be our $\mu(x,y,w,t)$ with $\tau=\varphi(t)$ a reparametrization
as part of the model. For simplicity let us consider the power law
model, that is, 
\begin{equation}
\tau=\lambda_{1}(t+\lambda_{2})^{\alpha},\label{eq:p-la1}
\end{equation}
where $\lambda_{1}$ and $\lambda_{2}$ are two positive numbers which
then become our model parameters. Since 
\[
\mu(x,y,w,t)=-\frac{|x-y|^{2}}{8\nu\tau}w+o(|x-y|^{2})w,
\]
therefore $a_{k}^{i}=0$ and $b_{jk}^{i}=-\delta_{jk}\frac{1}{8\nu\tau}w^{i}$.
In particular, 
\begin{equation}
b^{i}(w,t)=-\frac{3w^{i}}{8\nu\lambda_{1}(t+\lambda_{2})^{\alpha}}\label{eq:la2}
\end{equation}
for $i=1,2,3$. Since the conditional mean function $\mu$ depends
only on $|x-y|^{2}$ and decays exponentially fast at infinity, $B^{i}=0$
and $D^{i}=0$, which implies that 
\begin{equation}
A^{i}(w,t)=\frac{3w^{i}}{4\lambda_{1}(t+\lambda_{2})^{\alpha}}.\label{eq:la3}
\end{equation}
Thus the PDF PDE with these parameters is reduced to the simple parabolic-transport
equation 
\[
\left(\frac{\partial}{\partial t}-\nu\Delta_{x}\right)f=\frac{\partial}{\partial w^{i}}\left(A^{i}f\right).
\]
The divergence of $A$ is given by 
\[
\frac{\partial A^{i}}{\partial w^{i}}(x,w,t)=\frac{9}{4\lambda_{1}(t+\lambda_{2})^{\alpha}}.
\]

If $\alpha\neq1$ but $\alpha>0$, by using the Feynman-Kac formula
we have 
\begin{equation}
f(w;x,t)=\theta(\alpha,t)^{3}\mathbb{E}\left[f(Y(t,t),X(t,t),0)\right],\label{eq:m2-s0}
\end{equation}
where we have introduced 
\begin{equation}
\theta(\alpha,t)=\exp\left[\frac{3\left((t+\lambda_{2})^{1-\alpha}-\lambda_{2}^{1-\alpha}\right)}{4\lambda_{1}(1-\alpha)}\right]\label{eq:theta-d}
\end{equation}
for simplicity, and $X$ and $Y$ are solutions to SDE 
\[
\begin{cases}
\dd X^{i}(t,s)=\sqrt{2\nu}\dd M_{s}^{i}, & X(t,0)=x,\\
\dd Y^{i}(t,s)=\frac{3Y^{i}(t,s)}{4\lambda_{1}(t-s+\lambda_{2})^{\alpha}}\dd s, & Y(t,0)=w.
\end{cases}
\]

If $\alpha=1$ then 
\begin{equation}
f(w;x,t)=\lambda(t)^{3}\mathbb{E}\left[f(Y(t,t),X(t,t),0)\right],\label{eq:m2-s1-1}
\end{equation}
where 
\begin{equation}
\lambda(t)=\left(\frac{t+\lambda_{2}}{\lambda_{2}}\right)^{\frac{3}{4\lambda_{1}}}.\label{eq:ren-1}
\end{equation}

The previous SDEs have explicit solutions: 
\[
X(t,t)=x+\sqrt{2\nu}M_{t}
\]
and 
\[
Y(t,t)=w\theta(\alpha,t)
\]
if $\alpha\neq1$, and

\[
Y(t,t)=\lambda(t)w
\]
if $\alpha=1$.

Therefore, if $\alpha>0$ and $\alpha\neq1$, by plugging these explicit
solutions $X$ and $Y$ into \eqref{eq:m2-s0} we have 
\begin{align}
f(w;x,t) & =\theta(\alpha,t)^{3}\mathbb{E}\left[f\left(w\theta(\alpha,t),x+\sqrt{2\nu}M_{t},0\right)\right]\nonumber \\
 & =\theta(\alpha,t)^{3}\mathbb{E}\left[f\left(w\theta(\alpha,t),x+\sqrt{2\nu t}\xi,0\right)\right]\label{eq:m2-s1}
\end{align}
where $\xi\sim N(0,I_{3})$.

As in the previous section, the initial vorticity is assumed to be
of the form 
\[
W(x,0)=\omega(x)+\varepsilon(x),
\]
where $\varepsilon(x)$ has a normal distribution $N(0,\sigma^{2})$
and $\omega(x)$ is the mean vorticity at $x$, so that 
\[
f(w;x,0)=\frac{1}{\left(2\pi\sigma^{2}\right)^{3/2}}\exp\left[-\frac{\left|w-\omega(x)\right|^{2}}{2\sigma^{2}}\right]
\]
and therefore, according to \eqref{eq:m2-s1}, 
\[
f(w;x,t)=\theta(\alpha,t)^{3}\mathbb{E}\left[\frac{1}{\left(2\pi\sigma(x+\sqrt{2\nu t}\xi)^{2}\right)^{3/2}}e^{-\frac{\left|w\theta(\alpha,t)-\omega(x+\sqrt{2\nu t}\xi)\right|^{2}}{2\sigma(x+\sqrt{2\nu t}\xi)^{2}}}\right],
\]
where $\xi\sim N(0,I_{3})$. Hence $f(w;x,t)$ is no longer Gaussian
density unless $\sigma(x)$ is independent of $x$ and $\omega=0$
identically.

Let us discuss a special case where $\omega=0$ identically and $\sigma>0$
is constant. Then 
\[
f(w;x,t)=\frac{\theta(\alpha,t)^{3}}{\left(2\pi\sigma^{2}\right)^{3/2}}\exp\left[-\frac{\left|w\theta(\alpha,t)\right|^{2}}{2\sigma^{2}}\right]
\]
is a Gaussian density with mean $0$ and variance $\rho^{2}I_{3}$,
where 
\begin{equation}
\rho^{2}(x,t)=\frac{1}{\sqrt{\theta(\alpha,t)}}\sigma^{2}.\label{eq:var-001}
\end{equation}

If $\sigma>0$ is a constant and $\omega$ does not vanish identically,
then $f(w;x,t)$ is not Gaussian, but has a nice representation: 
\begin{equation}
f(w;x,t)=\frac{\theta(\alpha,t)^{3}}{\left(2\pi\sigma^{2}\right)^{3/2}}\mathbb{E}\left[e^{-\frac{\left|w\theta(\alpha,t)-\omega(x+\sqrt{2\nu t}\xi)\right|^{2}}{2\sigma^{2}}}\right],\label{eq:rep-w01}
\end{equation}
where $\xi\sim N(0,I_{3})$, and $\theta(\alpha,t)$ is defined by
\eqref{eq:theta-d}.

A similar discussion applies to the model where $\alpha=1$, which
is certainly interesting too. For this model the PDF $f(w;x,t)$ is
given by 
\begin{equation}
f(w;x,t)=\lambda(t)^{3}\mathbb{E}\left[f\left(w\lambda(t),x+\sqrt{2\nu t}\xi,0\right)\right]\label{eq:a0re-01}
\end{equation}
where $\xi\sim N(0,I_{3})$ and $\lambda(t)$ is defined by \eqref{eq:ren-1}.

Suppose again the initial vorticity 
\[
W(x,0)=\omega(x)+\varepsilon(x),
\]
where $\varepsilon(x)$ has a normal distribution $N(0,\sigma^{2}(x))$
and $\omega(x)$ is the mean. Then 
\begin{equation}
f(w;x,t)=\mathbb{E}\left[\frac{\lambda(t)^{3}}{\left(2\pi\sigma(x+\sqrt{2\nu t}\xi)^{2}\right)^{3/2}}e^{-\frac{\left|w\lambda(t)-\omega\left(x+\sqrt{2\nu t}\xi\right)\right|^{2}}{2\sigma(x+\sqrt{2\nu t}\xi)^{2}}}\right],\label{eq:g-int-01}
\end{equation}
which is not Gaussian except for the following special case.

If $\sigma(x)=\sigma>0$ is a constant and $\omega(x)=0$, then 
\[
f(w;x,t)=\frac{\lambda(t)^{3}}{\left(2\pi\sigma^{2}\right)^{3/2}}e^{-\frac{\left|w\lambda(t)\right|^{2}}{2\sigma^{2}}}
\]
is Gaussian with mean zero and variance $\rho^{2}I_{3}$, where 
\[
\rho^{2}(x,t)=\sigma^{2}\left(\frac{\lambda_{2}}{t+\lambda_{2}}\right)^{\frac{3}{2\lambda_{1}}}.
\]

The most interesting case for the purpose of modeling turbulent flows
is the model where $\sigma>0$ is a small parameter and $\omega$
is not zero, so that 
\[
f(w;x,t)=\frac{\lambda(t)^{3}}{(2\pi\sigma^{2})^{3/2}}\mathbb{E}\left[e^{-\frac{\left|w\lambda(t)-\omega\left(x+\sqrt{2\nu t}\xi\right)\right|^{2}}{2\sigma^{2}}}\right]
\]
where $\xi\sim N(0,I_{3})$. Although it is no longer Gaussian, some
of its features can be extracted by doing Monte-Carlo simulations.

We may conclude that this model provides some nice features of the
propagation of the vorticity which may be helpful for the understanding
of the energy dissipation in turbulence, and we will explore this
in a separate work.

\end{document}